\documentclass[10pt, final, journal, letterpaper, twocolumn]{IEEEtran}
\usepackage{amssymb}
\usepackage{amsmath}
\usepackage{amsfonts}
\usepackage{graphicx}
\usepackage{algorithm}
\usepackage{algorithmic}
\usepackage{epstopdf}
\usepackage{cite}
\usepackage{exscale}
\usepackage{relsize}

\usepackage{color}
\usepackage{multirow}
\usepackage{subfigure}
\usepackage{bm}

\newtheorem{theorem}{\textbf{Theorem}}

\newtheorem{corollary}{\textbf{Corollary}}

\newtheorem{definition}{\textbf{Definition}}

\newtheorem{lemma}{\textbf{Lemma}}

\begin{document}
\title{Modeling and Analysis for Cache-Enabled Networks with Dynamic  Traffic}
\author{ Bin Xia, {\em Senior Member, IEEE}, Chenchen Yang,  Tianyu Cao,\\
\thanks{This work has been accepted by IEEE Communications Letters. Manuscript received June 02, 2016. The associate editor coordinating the review of this letter and approving it for publication was Alexey Vinel.}
\thanks{This work is supported in part by the National Key Science \& Technology Specific Program (Grant No. 2016ZX03001015) and the Shenzhen-Hong Kong Innovative Technology Cooperation Funding (Grant No. SGLH20131009154139588).}
\thanks{The authors are with the Department of Electronic Engineering, Shanghai Jiao Tong University, Shanghai, China. Emails: \texttt {\{bxia, zhanchifeixiang, sherlockdoylecaoty\}@sjtu.edu.cn}. C. Yang is the corresponding author.}
}
\maketitle
\begin{abstract}
Instead of assuming fully loaded cells in the analysis on cache-enabled networks with tools of stochastic geometry, we focus on the dynamic traffic in this letter. With modeling traffic dynamics of request arrivals and departures, probabilities of full-, free-, and modest-load cells in the large-scale cache-enabled network are elaborated based on the traffic queue state. Moreover, we propose to exploit the packets cached  at cache-enabled users as side information to cancel the incoming interference. Then the packet loss rates for both the cache-enabled and cache-untenable users are investigated. The simulation results verify our analysis.
\end{abstract}
\section{Introduction}
Caching is a promising  way to offload the wired backhaul  traffic and the over-the-air traffic, especially for the highly-loaded network \cite{femto2, 5G2, editor}. Poisson Point Process (PPP) of the stochastic geometry is a tractable tool to model and analyze wireless networks. The accuracy and tractability of  PPP are elaborated in \cite{SINR, capacity1}, while the fully loaded cells (full queues at all times) are assumed. Based on this assumption, related studies have been conducted on the cache-enabled network. \cite{geographic} focuses on the caching placement with stochastic geometry.
The outage probability and the delivery rate are investigated with PPP for the backhaul-limited and cache-enabled network in \cite{backhaul1}.
\cite{letter1} explores the  storage-bandwidth tradeoffs with modeling locations of small cell base stations (BSs) as PPP.
However, in previous studies, the assumption of fully loaded cells is too idealistic to fully capture
the randomness of the stochastic network. Traffic dynamics of request arrivals and departures are studied in typical researches such as \cite{editor}.

As BSs transmit wireless signals simultaneously, interference occurs and should be eliminated as much as possible. Cache-induced opportunistic cooperative multiple-input multiple-output (MIMO) is proposed for interference mitigation by caching parts of files at BSs in \cite{Anliu}. Caching split files at transmitters is considered in \cite{aided} to implement interference cancellation. \cite{IA} analyzes the benefits of caching and interference alignment at BSs in MIMO system with limited backhaul.  Cache placement and interference management are elaborated in \cite{tranrecei} where both transmitters and receivers are  cache-enabled. However, backhaul is needed for the inter-connection of cache-enabled transmitters to achieve cooperative transmission in previous studies. Interference cancellation at cache-enabled receivers to relieve backhaul requirement at the transmitter side remains to be studied.

In this letter, dynamic traffic is considered in the stochastic cache-enabled network, and  the modeling method is proposed. Interference cancellation is performed at receivers using the cached packets. The main contributions are as follows:
\begin{itemize}
\item\emph{Stochastic cache-enabled network with dynamic traffic}:  Full-, free-, and modest-load cells are defined and the probabilities are analyzed based on queue states at BSs.
\item\emph{Interference cancellation with cached packets at receivers}: We propose the protocol that cache-enabled users exploit cached packets as side information to cancel incoming interference. The packet loss rates are analyzed.
\end{itemize}
\section{System model}
Consider a large-scale network, where BSs and users are spatially distributed based on two mutually independent PPPs $\Phi_b$ and $\Phi_u$, with intensity $\varphi_b$ and $\varphi_u$, respectively. Dynamic traffic is considered and packet (file) requests of each user  are trigged based on the Poisson process with parameter $\lambda$ $[$packets/second$]$
\cite{editor, Queue}. Users request packets from a given library $\mathcal{C} \triangleq \{c_1, c_2, \cdots ,c_C\}$ and all packets have the same size $T$ $[$Mbits$]$\footnote{For ease of illustration, we assume that all packets have the same size. Results in the letter can be extended to the case of different packet sizes.}.
Each user randomly requests one packet in each time. Different packets are accessed by the user with different frequencies, which can be defined with the  packet popularity distribution $\mathcal{F} \triangleq \{f_1, f_2, \cdots ,f_C\}$ for $f_i\in[0,1]$ and $\sum_{i=1}^C f_i=1$. Without loss of generality, we assume $f_i$ is the popularity of packet $c_i$ and $f_1 \geq f_2,\cdots, \geq f_C$.

We focus on the downlink transmission, and the system is time-slotted with fixed duration $\tau$ $[$seconds$]$. Each BS transmits one packet in a slot with transmission power $P$. A user obtains packets from the BS providing the highest long-term average received signal strength (i.e., from the nearest BS).  Requests are queued in the infinite buffers of the BS until they can be served. The service discipline is FIFO (first-in, first-out) \cite{editor}. Make the assumption that new request arrivals during a slot can not start to be served until the beginning of the  next slot.

Consider a part ($0 <\alpha <1$ proportion) of users are cache-enabled and each of them has a limited caching storage with size of $M\times T$ $[$Mbits$]$. When the network is off-peak, e.g., the nighttime, the $M$ most popular packets are broadcasted via the BSs to and then pre-cached at the cache-enabled users.  Denote the set of the cached packets as $\mathcal{M}=\{c_1,c_2,...,c_M\}$. When  the requested packet is available in the user's local caching set $\mathcal{M}$, the user obtains it immediately; otherwise, the BS transmits the requested packet to the user via the wireless downlink{\footnote{Extra cost (e.g., delay) for the BS to fetch packets from servers via wired backaul is assumed to be neglected. It can be a promising topic for the future.}}.
Denote the cache hit ratio of the cache-enabled user as $\delta=\sum_{i=1}^M f_i$. 
\section{Probabilities of the free-, full- and modest-load cells}
Then  BS cells are polygonal and form the Voronoi tessellation in $\mathbb{R}^2$. The size of the BS cell is a random variable, which can be accurately predicted by the probability density function (PDF) $ f_S(s)\!=\!\frac{(\varphi_b K)^K}{{\Gamma(K)}}{s^{K-1}e^{-\varphi_bKs}},  0\leq s<\infty$. $s$ denotes the cell size, $K=3.575$ is a constant factor and $\Gamma(\cdot)$ is the gamma function. We then have the following lemma \cite{CSMA},
\begin{lemma}
The probability mass function (PMF) of the number of users in the coverage of  a BS is
\begin{align}
&\mathbb{P}(N=n)=\frac{\varphi_{u}^n(K\varphi_b)^K\Gamma(K+n)}{(\varphi_{u}+K\varphi_b)^{K+n}\Gamma(n+1)\Gamma(K)}.
\end{align}
\end{lemma}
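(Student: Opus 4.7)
The plan is to compute $\mathbb{P}(N=n)$ by conditioning on the random cell size $S$ and marginalizing using the PDF $f_S(s)$ given just above the lemma. Since the users form a homogeneous PPP with intensity $\varphi_u$ that is independent of the BS process (and hence of the Voronoi tessellation), the number of users falling in a cell of fixed area $s$ is Poisson-distributed with mean $\varphi_u s$. Therefore the conditional PMF is
\begin{equation}
\mathbb{P}(N=n\mid S=s)=\frac{(\varphi_u s)^n e^{-\varphi_u s}}{n!},
\end{equation}
and the law of total probability gives $\mathbb{P}(N=n)=\int_0^\infty \mathbb{P}(N=n\mid S=s)\,f_S(s)\,ds$.

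Next I would substitute the gamma-shaped cell-size PDF $f_S(s)=\frac{(\varphi_b K)^K}{\Gamma(K)}\,s^{K-1}e^{-\varphi_b K s}$, pull all $s$-free factors outside the integral, and collect powers of $s$ and exponentials. This yields
\begin{equation}
\mathbb{P}(N=n)=\frac{\varphi_u^{n}(\varphi_b K)^{K}}{n!\,\Gamma(K)}\int_{0}^{\infty}s^{\,n+K-1}\,e^{-(\varphi_u+\varphi_b K)s}\,ds.
\end{equation}
The remaining integral is a standard gamma integral; using $\int_0^\infty s^{a-1}e^{-bs}ds=\Gamma(a)/b^{a}$ with $a=n+K$ and $b=\varphi_u+\varphi_b K$ produces the factor $\Gamma(K+n)/(\varphi_u+K\varphi_b)^{K+n}$. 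Writing $n!=\Gamma(n+1)$ then matches the claimed expression verbatim.

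There is no real obstacle here: the argument is just ``PPP thinning gives a conditional Poisson, then integrate against the cell-area distribution.'' The only subtle point worth flagging is the independence invoked in the conditioning step---strictly, the typical cell area used in $f_S(s)$ is the area of the Voronoi cell around a \emph{typical} BS (Palm distribution of $\Phi_b$), and one must verify that the user PPP $\Phi_u$ remains a PPP with intensity $\varphi_u$ under this Palm conditioning. This follows from Slivnyak's theorem together with the independence of $\Phi_b$ and $\Phi_u$, so the conditional Poisson count with mean $\varphi_u s$ is justified, and the rest is a routine computation.
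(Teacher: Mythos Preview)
Your proposal is correct and follows exactly the paper's approach: condition on the Voronoi cell size, use the Poisson count $\mathbb{P}(N=n\mid S=s)=(\varphi_u s)^n e^{-\varphi_u s}/n!$, and marginalize against the gamma-shaped $f_S(s)$. In fact your write-up is more complete than the paper's proof, which stops at the integral $\int_0^\infty \mathbb{P}(N=n\mid S=s)f_S(s)\,\mathrm{d}s$ without carrying out the gamma-integral evaluation or discussing the Palm/Slivnyak justification.
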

\begin{proof}
With the property of PPP and  conditioned on the cell size, the number of users $N$ in a BS cell is a Poisson random variable, which is generated by the conditional probability mass function (PMF),  
$\mathbb{P}(N=n|S=s)={(\varphi_{u}s)^n\text{exp}\{-\varphi_{u}s\}}{(n!)}^{-1}, n=1,2,\ldots$. The PMF of the number of users in a BS cell is $\mathbb{P}(N=n)=\int_{0}^\infty\mathbb{P}(N=n|S=s)f_{S}(s)\mathrm{d}{s}$. Then the proof is finished.
\end{proof}

By noting that $\alpha$ of the users are cache-enabled and $M$ packets have been cached into the cache-enabled users, the packet arrivals at a BS are Poisson process with parameter $\lambda_{b|n}=(1-\alpha\delta)n\lambda$, conditioned on that the number of users in the coverage of  a BS is $N=n$. Denote the number of requests in the queue at a BS after the $k$-th slot as $q_k$, that after the slot $(k+1)$  as $q_{k+1}$.
The total number of request arrivals at a BS during the slot $(k+1)$ is denoted by $a_{k+1}$. So the $a_{k+1}$'s are independent identically distributed Poisson variables with parameter $\lambda_{b|n}$. Further we consider the scheme that a request will be deleted from the buffer no matter whether the requested packet has been transmitted completely at the end of the slot scheduled to it. We then have
$q_{k+1}=\left[q_k-1\right]^++a_{k+1}$. Here, $[x]^{+} \triangleq$ max$\{x,0\}$.  In order for the BS to reach a stochastic equilibrium (to keep steady), the average number of requests arriving in a slot must be lower than the average number of packets that a BS can transmit out \cite{editor, math1}. It implies that the stochastic equilibrium is possible if and only if $\lambda_{b|n}\tau<1$.
It provides the critical condition (i.e., $(1-\alpha\sum_{i=1}^M f_i)\lambda\tau n <1$) for a BS cell to keep steady. Let $\bar\lambda$ denote $(1-\alpha\sum_{i=1}^M f_i)\lambda$, i.e., $\bar\lambda\triangleq(1-\alpha\sum_{i=1}^M f_i)\lambda$. We then have the following definition to clarify cells into three different types,

\textbf{Full-load cell:} Define the cell with the queue such that $\bar\lambda\tau n\geq1$  as full-load cell. Since the BSs out of equilibrium are with infinitely-backlogged packets, we consider that they are transmitting packets at all times due to the saturation condition.

\textbf{Free-load cell:} Define the cell with the queue state such that i) $\bar\lambda\tau n<1$, ii) the queue is empty, as free-load cell  (i.e., there is no request waiting to be served in the queue). The BSs need not transmit packet in the free-load state.

\textbf{Modest-load cell:}  Define the cell with non-empty queue such that $\bar\lambda\tau n<1$ as modest-load cell. The BSs need  transmit packet in the corresponding slot.

Therefore, based on the definition above, the probability that a randomly selected cell is a full-load cell is
\begin{equation}\label{psteady}
p_t=\sum\nolimits_{n=\lceil{(\bar\lambda\tau)}^{-1}\rceil}^\infty\mathbb{P}(N=n),
\end{equation}
where the ceiling function $\lceil x\rceil$ is the smallest integer greater than or equal to $x$. It can be observed that $p_t$ is jointly affected by the physical layer and information-centric parameters, e.g., $\varphi_u,\varphi_b,\lambda, \alpha, f_i$ and $M$. Moreover,  $p_t$ also means the fraction of the full-load cells in the two-dimensional plane.

To further get the probability of the free-load BS cell, we analyze the probability that the queue at the BS of stochastic equilibrium is empty.  Conditioned on the number ($N=n$) of users in the coverage of  a BS, define the conditional probability generating function (PGF) $Q(z|n)$ of the number of requests in the queue at the equilibrious BS (i.e., $n<\lceil{(\bar\lambda\tau)}^{-1}\rceil$)  as
\begin{equation}\label{ztran}
Q(z|n)\triangleq\mathbb{E}\left[z^q|n\right], ~~~ \text{where},~q|n\triangleq\lim_{k\rightarrow\infty}q_{k}|n.
\end{equation}
Here, $\mathbb{E}[x]$ is the expectation over variable $x$. Therefore,
\begin{theorem}
The probability of the free-load BS cells is
\begin{align}\label{zero2}
{p}_{0}=\sum\nolimits_{n=0}^{\lceil{(\bar\lambda\tau)}^{-1}\rceil-1}\mathbb{P}(N=n)(1-\lambda_{b|n} \tau).
\end{align}
\end{theorem}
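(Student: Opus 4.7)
The plan is to condition on the number $N$ of users associated with the BS and apply total probability. By the definitions introduced just above the statement, for $n\ge\lceil(\bar\lambda\tau)^{-1}\rceil$ the cell is full-load and therefore cannot be free-load, so those terms drop out of the sum. For $n<\lceil(\bar\lambda\tau)^{-1}\rceil$ the stability condition $\lambda_{b|n}\tau<1$ holds, the queue admits a stationary distribution, and being free-load is exactly the event $\{q=0\}$ in that stationary regime. Hence
\begin{equation*}
p_{0}=\sum\nolimits_{n=0}^{\lceil(\bar\lambda\tau)^{-1}\rceil-1}\mathbb{P}(N=n)\,\mathbb{P}(q=0\mid n),
\end{equation*}
and the problem reduces to evaluating $\mathbb{P}(q=0\mid n)$ for each admissible $n$.

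To compute that empty-queue probability I would use the conditional PGF $Q(z\mid n)$ defined in (\ref{ztran}). Starting from the recursion $q_{k+1}=[q_k-1]^{+}+a_{k+1}$, independence of $a_{k+1}$ and $q_k$, and the Poisson PGF $A(z)=\exp(\lambda_{b|n}\tau(z-1))$, the stationary version of the recursion is
\begin{equation*}
Q(z\mid n)=A(z)\bigl[\mathbb{P}(q=0\mid n)+z^{-1}\bigl(Q(z\mid n)-\mathbb{P}(q=0\mid n)\bigr)\bigr],
\end{equation*}
which I can solve algebraically for $Q(z\mid n)$ to obtain the closed form
\begin{equation*}
Q(z\mid n)=\frac{\mathbb{P}(q=0\mid n)\,A(z)(z-1)}{z-A(z)}.
\end{equation*}

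The final step is normalization via $Q(1\mid n)=1$. Since $A(1)=1$, both numerator and denominator vanish at $z=1$, so I would apply L'H\^opital's rule once; using $A'(1)=\lambda_{b|n}\tau$ gives $1=\mathbb{P}(q=0\mid n)/(1-\lambda_{b|n}\tau)$, hence $\mathbb{P}(q=0\mid n)=1-\lambda_{b|n}\tau$. Substituting this back into the conditional sum yields exactly (\ref{zero2}).

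The step I expect to be the main obstacle is the clean justification of the stationary PGF manipulation: one must invoke $\lambda_{b|n}\tau<1$ to guarantee that the limit $q\mid n\triangleq\lim_{k\to\infty}q_k\mid n$ exists and that $Q(z\mid n)$ is analytic on a neighborhood of the closed unit disk, and one must check that $z-A(z)$ has a simple zero at $z=1$ so the L'H\^opital argument is valid. This stability condition is precisely what cuts the outer sum off at $\lceil(\bar\lambda\tau)^{-1}\rceil-1$, linking the queueing analysis back to the cell-classification.
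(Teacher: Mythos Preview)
Your proposal is correct and follows essentially the same approach as the paper: condition on $N=n$, restrict the sum to the stable range $n<\lceil(\bar\lambda\tau)^{-1}\rceil$, derive the stationary PGF identity $Q(z\mid n)=A(z\mid n)\mathbb{P}_{0|n}(z-1)/(z-A(z\mid n))$ from the recursion, and normalize at $z=1$ to obtain $\mathbb{P}_{0|n}=1-\lambda_{b|n}\tau$. Your added remarks on stability and the explicit L'H\^opital step are welcome clarifications of points the paper leaves implicit, but the argument is the same.
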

\begin{proof}
According to the definition in (\ref{ztran}), we have
\begin{align}\label{q}
Q(z|n)&=\lim_{k\rightarrow\infty}\mathbb{E}\left[z^{q_{k+1}}|n\right]\stackrel{(a)}{=}\mathbb{E}\left[z^{a+\left[q-1\right]^+}|n\right]\nonumber\\
&\stackrel{(b)}{=}\mathbb{E}_a\left[z^a|n\right]\mathbb{E}_q\left[z^{\left[q-1\right]^+}\Big|n\right]\triangleq A(z|n)H(z|n).
\end{align}
In Step (a), $a$ is a random variable with the same distribution of $a_{k+1}$. Step (b) is obtained by noting that $a_{k+1}$ and $q_k$ are mutually independent random variables, so the same holds for variables $a$ and $q$. Moreover,
$H(z|n)=\mathbb{E}_q[z^{[q-1]^+}|n]=\sum\nolimits_{i=0}^1\mathbb{P}_{i|n}+\sum\nolimits_{i=2}^\infty\mathbb{P}_{i|n}z^{i-1}=(1-z^{-1})\mathbb{P}_{0|n}+Q(z|n)z^{-1},$
where $\mathbb{P}_{i|n}\triangleq\text{Prob}[q=i|n]$.  Substituting it to (\ref{q}), we have
$Q(z|n)={A(z|n)\mathbb{P}_{0|n}(z-1)}{[z-A(z|n)]^{-1}}.$
Substituting $ Q(z|n)|_{z=1}=\sum_{i=0}^\infty \mathbb{P}_{i|n} z^{i}|_{z=1}=1$ to $Q(z|n)$ and by noting that $A(z|n)=\text{exp}\{-\lambda_{b|n}\tau(1-z)\}$, we have $\mathbb{P}_{0|n}=1-\lambda_{b|n} \tau.$
We then get (\ref{zero2}) with ${p}_{0}=\sum_{n=0}^{\lceil{(\bar\lambda\tau)}^{-1}\rceil-1}\mathbb{P}(N=n)\mathbb{P}_{0|n}$.
\end{proof}


Moreover, based on (\ref{psteady}) and (\ref{zero2}),  the probability of the modest-load BS cells is $p_m=1-p_0-p_t$.
\section{Packet loss rate with and without interference cancellation}
Given the probabilities of BSs in different states, the actually active BSs in each slot are located as thinning PPP with parameter $\varphi_{a}=[p_t+(1-p_t)(1-p_0)]\varphi_b=[1-p_0+p_0p_t]\varphi_b$. Without loss of generality, according to Slivnyak's theorem \cite{SINR, capacity1}, we assume that there is a typical user $u_0$ with or without caching ability at the origin of the Euclidean area.

We first analyze the scenario where the typical user $u_0$ is cache-untenable. The received signal of $u_0$ is given by
 \begin{equation}
 y_0=\sqrt{P}d_{0,0}^{-\frac{\beta}{2}}h_{0,0}x_0+\sum\limits_{k\in\Phi_a\odot b_0 }\sqrt{P}d_{k,0}^{-\frac{\beta}{2}}h_{k,0}x_k+n_0,
 \end{equation}
 where $b_0$ is the serving BS of user $u_0$, $d_{0,0}$ is the distance between $u_0$ and $b_{0}$. $\beta\geq 2$ denotes the path-loss exponent. We consider Rayleigh fading channel from $b_0$ to $u_0$ with average unit power, $|h_{0,0}|^2\sim \exp(1)$. $x_0$ is the transmit signal from $b_0$ to $u_0$ with unit power. $d_{k,0}$ is the distance between interfering BS $k$ and $u_0$. Then  interfering BSs are distributed according to PPP $\Phi_a$ with parameter $\varphi_{a}$ outside the circle which is centered at the origin and with radius $d_{0,0}$ (denoted as $\Phi_a \odot b_0$) . $|h_{k,0}|^2\sim \exp(1)$ is the channel fading between  interfering BS $k$ and $u_0$, $x_k$ is the transmit signal from interfering BS $k$  to user $u_0$ with average unit power, and $n_0\sim \mathcal{CN}(0,\sigma^2)$ is the zero-mean additive white Gaussian noise (AWGN) with power $\sigma^2$. In addition, we consider the channels remain constant in each slot. Then the signal-to-interference-plus-noise ratio (SINR) of the cache-untenable user $u_0$ is
\begin{align}\label{sinr}
    \text{SINR}_u&\!=\!\frac{P|h_{0,0}|^2d_{0,0}^{-\beta}}{\sum_{k\in\Phi_a\odot b_0 }{P|h_{k,0}|^2d_{k,0}^{-\beta}}\!+\!\sigma^2}\!\triangleq\!\frac{P|h_{0,0}|^2d_{0,0}^{-\beta}}{{I}_u+\sigma^2}.
\end{align}
$I_{u}$ is the cumulative interference from the interfering BSs.

Next, we analyze the scenario where the typical user is cache-enabled.  Note that $\alpha$ of users are cache-enabled.  $\delta$ of their requested packets can be obtained immediately from local caching and yet $1-\delta$ of the requested packets need to be obtained from the BSs. On the other hand,  $1-\alpha$ of users are cache-untenable and all of their requests need to be served by the BSs. $\delta$ of requested packets triggered by the cache-untenable users are the same with that in the caching set $\mathcal{M}$. That is, the requested packets of cache-untenable users can be divided into two subsets,  one subset includes the same packets in $\mathcal{M}=\{c_1,c_2,...,c_M\}$, another subset includes the packets in the complementary set $\mathcal{C}\backslash\mathcal{M}=\{c_{M+1},c_{M+2},...,c_C\}$. In other words, on average, if  $\mathcal{P}$ requests are triggered by users in a slot, $\alpha\delta\mathcal{P}$ packets are obtained immediately from the local caching, and that need to be obtained from BSs are given by $[\alpha(1-\delta)+(1-\alpha)]\mathcal{P}=\{\alpha(1-\delta)+[(1-\alpha)(1-\delta)+(1-\alpha)\delta]\}\mathcal{P}=(1-\delta)\mathcal{P}+(1-\alpha)\delta\mathcal{P}
\triangleq\mathcal{P}_{a1}+\mathcal{P}_{a2}$. The $\mathcal{P}_{a2}$ and the  $\mathcal{P}_{a1}$  packets  fall in the caching set $\mathcal{M}$ and the complementary set $\mathcal{C}\backslash\mathcal{M}$, respectively. Then the BSs transmitting the two subsets of packets are respectively distributed according to thinning PPP $\Phi_{a2}$ with intensity $\frac{\mathcal{P}_{a2}}{\mathcal{P}_{a1}+\mathcal{P}_{a2}}\varphi_{a}$ and thinning PPP $\Phi_{a1}$ with intensity $\frac{\mathcal{P}_{a1}}{\mathcal{P}_{a1}+\mathcal{P}_{a2}}\varphi_{a}$. When the cache-enabled user requests an uncached packet, the received signal is given by
\begin{align}
    y_0&=\sqrt{P}d_{0,0}^{-\frac{\beta}{2}}h_{0,0}x_0+\sum\limits_{j\in\Phi_{a1}\odot b_0 }\sqrt{P}d_{j,0}^{-\frac{\beta}{2}}h_{j,0}x_j\nonumber\\
    &+\sum\limits_{k\in\Phi_{a2}\odot b_0 }\sqrt{P}d_{k,0}^{-\frac{\beta}{2}}h_{k,0}x_k+n_0.
 \end{align}
The interferences respectively come from the BSs distributed with $\Phi_{a1}$ and $\Phi_{a2}$ outside the circle which is centered at the origin and with radius $d_{0,0}$ (denoted as $\Phi_{a2} \odot b_0$ and $\Phi_{a1} \odot b_0$).

We consider the ideal case that the channel state information (CSI) knowledge is available at the cache-enabled users. We further assume that, before interfering BSs in $\Phi_{a2}$ transmit packets that have been cached in the cache-enabled users, the cache-enabled users can be informed of the the packet indexes via extra interactive signals. Then the incoming interference from the BSs in $\Phi_{a2}$ can be canceled by  cache-enabled users with the side information (i.e., cached packets in local caching). So the SINR of the typical user is
\begin{align}\label{sinr2}
    \text{SINR}_c&\!=\!\frac{P|h_{0,0}|^2d_{0,0}^{-\beta}}{\sum_{j\in\Phi_{a1}\odot b_0 }{P|h_{j,0}|^2d_{j,0}^{-\beta}}\!+\!\sigma^2}\!\triangleq\!\frac{P|h_{0,0}|^2d_{0,0}^{-\beta}}{I_{c}+\sigma^2}.
\end{align}
$I_{c}$ is the cumulative interference from the interfering BSs.
\begin{definition}
The packet loss rate $\mathcal{P}_l$ is defined as the probability that a packet can not be transmitted completely at the end of the slot scheduled to it, i.e.,
\end{definition}
\begin{align}\label{define}
   \mathcal{P}_l&\triangleq\mathbb{E}\Big[\mathbb{P}\Big[{\tau}B \text{log}_2(1+\text{SINR})< T\Big]\Big]\nonumber\\
   &=\mathbb{E}\Big[\mathbb{P}\Big(\text{SINR}< 2^{\frac{T}{\tau B}}-1\Big)\Big],
\end{align}
where $B$ is the bandwidth. Denote $\bar{T}\triangleq2^{\frac{T}{\tau B}}-1$ hereafter. 
The average is taken over both the spatial PPP and the channel fading.
Firstly, we analyze the packet loss rate (PLR) for the typical user without caching ability. Denote $r$ as the distance between user $u_0$ and its serving node. The PDF of $r$ is $ f_{R}(r)=2\pi \varphi_b r \text{exp}\{-\pi\varphi_b r^2\}$ \cite{SINR}. We then have the following theorem,
\begin{theorem}\label{theorem1}
The PLR of the cache-untenable user is
\begin{align}\label{loss1}
&\mathcal{P}_{l,u}=
1-\nonumber\\
&2\pi \varphi_b\!\!\int_0^\infty\!\!\! r \text{exp}\Big\{\!-\!r^\beta P^{-1}\bar{T}\sigma^2{\setlength\arraycolsep{0.5pt}-}
\pi r^2\left[\varphi_aZ_1(\bar{T})\!+\!\varphi_b\right]\!\!\Big\}\mathrm{d}{r},
\end{align}
where $\mathcal{Z}_1(\bar{T})=\frac{2\bar{T}}{\beta-2}{}_2F_1[1,1-\frac{2}{\beta};2-\frac{2}{\beta};-\bar{T}]$,
and ${}_2F_1[\cdot]$ denotes the Gauss hypergeometric function.
\end{theorem}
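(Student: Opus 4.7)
The plan is to follow the standard Laplace-transform route for PPP SINR analysis. I would first condition on the link distance $r=d_{0,0}$, whose density is $f_R(r)=2\pi\varphi_b r\exp(-\pi\varphi_b r^2)$, and compute the conditional success probability $\mathbb{P}(\text{SINR}_u\geq \bar T\mid r)$, so that $\mathcal{P}_{l,u}=1-\int_0^\infty \mathbb{P}(\text{SINR}_u\geq \bar T\mid r)f_R(r)\,dr$. Using the threshold form from the definition in (\ref{define}) and the fact that $|h_{0,0}|^2\sim\exp(1)$, the conditional success probability becomes
\begin{equation*}
\mathbb{P}\!\left(|h_{0,0}|^2\!>\!\tfrac{\bar T r^\beta}{P}(I_u+\sigma^2)\,\Big|\,r\right)=e^{-\bar T r^\beta\sigma^2/P}\,\mathcal{L}_{I_u}\!\!\left(\tfrac{\bar T r^\beta}{P}\right),
\end{equation*}
which isolates the noise factor that will become $\exp\{-r^\beta P^{-1}\bar T\sigma^2\}$ inside the final integrand.

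Next I would evaluate the Laplace transform of the aggregate interference $I_u=\sum_{k\in\Phi_a\odot b_0}P|h_{k,0}|^2d_{k,0}^{-\beta}$. Since $\Phi_a$ is a thinned homogeneous PPP with intensity $\varphi_a$ restricted to the exterior of the disk of radius $r$ around the origin, and the fading marks are i.i.d.\ $\exp(1)$, I would apply the PGFL of a PPP to get
\begin{equation*}
\mathcal{L}_{I_u}(s)=\exp\!\left\{-2\pi\varphi_a\int_r^\infty\!\Big(1-\tfrac{1}{1+sPv^{-\beta}}\Big)v\,dv\right\}.
\end{equation*}
Plugging in $s=\bar T r^\beta/P$ and making the change of variable $u=(v/r)^2$ reduces the inner integral to $r^2\bar T\int_1^\infty\frac{du}{1+u^{\beta/2}\bar T^{-1}\cdot(\text{const})}$ type, which can be recognized as a Gauss hypergeometric representation. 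Careful bookkeeping of the limits yields the factor $\pi r^2\varphi_a Z_1(\bar T)$ in the exponent with $Z_1(\bar T)=\frac{2\bar T}{\beta-2}{}_2F_1[1,1-\tfrac{2}{\beta};2-\tfrac{2}{\beta};-\bar T]$; this is the step I expect to be the main algebraic obstacle, because one has to keep track of the lower integration limit $r$ (exclusion region), the right normalization so that it matches the ${}_2F_1$ series, and the sign conventions in the hypergeometric parameters.

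Finally I would multiply the three exponential factors (noise term, interference Laplace factor, and the $\exp(-\pi\varphi_b r^2)$ coming from $f_R$), integrate against $2\pi\varphi_b r\,dr$, and subtract from $1$ to obtain $\mathcal{P}_{l,u}$ as in (\ref{loss1}). The $\varphi_b$ in the $\pi r^2[\varphi_a Z_1(\bar T)+\varphi_b]$ bracket comes precisely from combining $\mathcal{L}_{I_u}$ with the distance density, while the $\varphi_a$ factor reflects that only the active BSs thinned by $p_t+(1-p_t)(1-p_0)$ generate interference, consistently with the thinning argument made just above (\ref{sinr}). No extra assumption beyond independence of $\Phi_a$ from the typical link and the Rayleigh assumption on $h_{0,0}$ is needed, so the computation closes after the hypergeometric reduction.
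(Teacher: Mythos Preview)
Your proposal is correct and follows essentially the same route as the paper: condition on the link distance $r$, exploit $|h_{0,0}|^2\sim\exp(1)$ to factor the success probability into a noise term times $\mathcal{L}_{I_u}(r^\beta P^{-1}\bar T)$, evaluate the latter via the PGFL of the PPP $\Phi_a$ over the exclusion region $\{v>r\}$, reduce the resulting radial integral to the hypergeometric form $\mathcal{Z}_1(\bar T)$, and then integrate against $f_R(r)$. The paper's proof is terser about the hypergeometric reduction step, but the structure and all key ingredients coincide with yours.
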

\begin{proof}
With the definition in (\ref{define}), we have 
\begin{align}\label{esinr}
\mathcal{P}_{l,u}&=1-\mathbb{E}\left[\mathbb{P}\left(\text{SINR}_u\geq\bar{T}\right)\right]\nonumber\\
                         &=1-\int_0^\infty e^{-r^\beta P^{-1}\bar{T}\sigma^2}\!\mathcal{L}_{I_u}\left[r^\beta P^{-1}\bar{T}|r \right]f_{R}(r)\mathrm{d}{r}.
\end{align}
Here, the interference comes from all of the actually active BSs spatially distributed as PPP $\Phi_a\odot b_0$ with density $\varphi_a$. So the Laplace transform $\mathcal{L}_{I_u}[r^\beta P^{-1}\bar{T}]$ is
\begin{align}\label{laplace1}
&\mathcal{L}_{I_u}\left[r^\beta P^{-1}\bar{T}\right]=\mathbb{E}_{I_u}\left[\text{exp}\left({-r^\beta P^{-1}\bar{T}I_u}\right)\right]\nonumber\\
&=\mathbb{E}_{\Phi_a,\{|h_{k,0}|^2\}}\Big[\text{exp}\Big({-r^\beta P^{-1}\bar{T}{\sum_{k\in\Phi_a\odot b_0 }{P|h_{k,0}|^2d_{k,0}^{-\beta}}}}\Big)\Big]\nonumber\\
&=\text{exp}\Big[{-2\pi\varphi_a\int_{r}^\infty\Big(1-\frac{1}{1+r^\beta\bar{T}v^{-\beta}}\Big)v\mathrm{d}{v}}\Big]\nonumber\\
&=\text{exp}\Big[{-\pi\varphi_ar^2\mathcal{Z}_1(\bar{T})}\Big].
\end{align}
Accordingly, we have (\ref{loss1}) and the proof is finished.
\end{proof}

It can be observed that the packet loss rate is  jointly decided by the  physical layer parameter (e.g., BS density $\varphi_b$ , user density $\varphi_u$, transmit power $P$, the slot duration $\tau$ and pass-loss $\beta$) and the information-centric parameter (e.g., the fraction of cache-enabled users $\alpha$, caching ability $M$, packet popularity $f_i$, request rate $\lambda$ and the packet size $T$).
\begin{corollary}
For the interference-limited network (i.e., $\sigma^2\rightarrow0$) and $\beta=4$, the PLR  is given by
\begin{align}
\!\!\mathcal{P}_{l,u}\!=\!\frac{1}{\Big[(1\!-\!p_0\!+\!p_0p_t)\sqrt{2^{\frac{T}{\tau B}}\!\!-\!\!1}~\!\text{arctan}({\sqrt{2^{\frac{T}{\tau B}}\!\!-\!\!1}})\Big]^{\!-\!1}\!\!\!\!+\!1}.
\end{align}
\end{corollary}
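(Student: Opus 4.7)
The plan is to specialize the integral in Theorem~\ref{theorem1} to the interference-limited, fourth-order path-loss regime, and then collapse the hypergeometric factor by a known closed form.

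First I would set $\sigma^{2}\to 0$ in the expression for $\mathcal{P}_{l,u}$ given in (\ref{loss1}). This kills the $e^{-r^{\beta}P^{-1}\bar T\sigma^{2}}$ factor, leaving a pure Gaussian-type integral
\begin{equation*}
\mathcal{P}_{l,u}=1-2\pi\varphi_{b}\int_{0}^{\infty}r\,\exp\!\left\{-\pi r^{2}\bigl[\varphi_{a}\mathcal{Z}_{1}(\bar T)+\varphi_{b}\bigr]\right\}\mathrm{d}r.
\end{equation*}
Via the substitution $u=r^{2}$ this integral evaluates to $\{2[\varphi_{a}\mathcal{Z}_{1}(\bar T)+\varphi_{b}]\}^{-1}$, so after simplification
\begin{equation*}
\mathcal{P}_{l,u}=\frac{\varphi_{a}\mathcal{Z}_{1}(\bar T)}{\varphi_{a}\mathcal{Z}_{1}(\bar T)+\varphi_{b}}.
\end{equation*}
Substituting $\varphi_{a}=(1-p_{0}+p_{0}p_{t})\varphi_{b}$ cancels $\varphi_{b}$ from numerator and denominator.

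Next I would specialize the factor $\mathcal{Z}_{1}(\bar T)$ to $\beta=4$. From the definition,
\begin{equation*}
\mathcal{Z}_{1}(\bar T)\big|_{\beta=4}=\bar T\;{}_{2}F_{1}\!\left[1,\tfrac{1}{2};\tfrac{3}{2};-\bar T\right].
\end{equation*}
The key analytical step is invoking the classical identity ${}_{2}F_{1}[1,\tfrac{1}{2};\tfrac{3}{2};-z^{2}]=\arctan(z)/z$, applied with $z=\sqrt{\bar T}$. This reduces $\mathcal{Z}_{1}(\bar T)$ to $\sqrt{\bar T}\,\arctan(\sqrt{\bar T})$.

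Finally I would plug this back in, obtaining
\begin{equation*}
\mathcal{P}_{l,u}=\frac{(1-p_{0}+p_{0}p_{t})\sqrt{\bar T}\,\arctan(\sqrt{\bar T})}{(1-p_{0}+p_{0}p_{t})\sqrt{\bar T}\,\arctan(\sqrt{\bar T})+1},
\end{equation*}
and rewrite it in the reciprocal-plus-one form stated in the corollary, after recalling $\bar T=2^{T/(\tau B)}-1$. The routine part is the Gaussian integral and algebraic rearrangement; the only non-mechanical step is recognizing the hypergeometric-to-$\arctan$ identity, which I expect to be the main (but mild) obstacle. No new probabilistic input beyond Theorem~\ref{theorem1} is needed.
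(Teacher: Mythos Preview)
Your proposal is correct and follows exactly the natural route implied by the paper (which states the corollary without proof): set $\sigma^{2}\to 0$ in (\ref{loss1}), evaluate the resulting Gaussian integral, substitute $\varphi_{a}=(1-p_{0}+p_{0}p_{t})\varphi_{b}$, and collapse $\mathcal{Z}_{1}(\bar T)$ at $\beta=4$ via the identity ${}_{2}F_{1}[1,\tfrac{1}{2};\tfrac{3}{2};-z^{2}]=\arctan(z)/z$. There is nothing to add or correct.
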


Similarly, for cache-enabled users, we have,
\begin{theorem}\label{theorem2}
When the cache-enabled user access  the uncached packet from the BS, the PLR is
\begin{align}\label{loss2}
\mathcal{P}_{l,c}&=
1-2\pi \varphi_b\int_0^\infty r\text{exp}\Big\{-r^\beta P^{-1}\bar{T}\sigma^2-\pi r^2\nonumber\\
&\times\Big[\frac{\mathcal{P}_{a1}}{\mathcal{P}_{a1}+\mathcal{P}_{a2}}\varphi_{a}Z_1(\bar{T})+\varphi_b\Big]\Big\}\mathrm{d}{r}.
\end{align}
\end{theorem}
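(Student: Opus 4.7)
The plan is to adapt the derivation of Theorem \ref{theorem1} almost line for line, the only change being that interference cancellation replaces the active-BS point process $\Phi_a$ by its thinning $\Phi_{a1}$. Starting from definition \eqref{define} applied to $\text{SINR}_c$ in \eqref{sinr2} and conditioning on $r=d_{0,0}$, the exponentiality of $|h_{0,0}|^2$ factors the success probability exactly as in \eqref{esinr}, yielding
\begin{equation*}
\mathcal{P}_{l,c}=1-\int_0^\infty e^{-r^\beta P^{-1}\bar{T}\sigma^2}\mathcal{L}_{I_c}\!\left[r^\beta P^{-1}\bar{T}\big|r\right]f_R(r)\,\mathrm{d}{r},
\end{equation*}
with $f_R(r)=2\pi\varphi_b r\,\text{exp}(-\pi\varphi_b r^2)$ as before.

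The second step is to evaluate $\mathcal{L}_{I_c}$. Because the cache-enabled user subtracts the contributions of every BS in $\Phi_{a2}$, whose transmitted packet lies in $\mathcal{M}$ and is therefore already available as side information, the residual interference $I_c$ is the shot noise of the thinned PPP $\Phi_{a1}\odot b_0$ with intensity $\tfrac{\mathcal{P}_{a1}}{\mathcal{P}_{a1}+\mathcal{P}_{a2}}\varphi_a$ on $\mathbb{R}^2$ outside the ball of radius $r$. Reusing the PPP Laplace-transform computation that produced \eqref{laplace1} and averaging over the Rayleigh fades $|h_{j,0}|^2\sim\exp(1)$ then gives
\begin{equation*}
\mathcal{L}_{I_c}\!\left[r^\beta P^{-1}\bar{T}\big|r\right]=\text{exp}\!\left[-\pi r^2\tfrac{\mathcal{P}_{a1}}{\mathcal{P}_{a1}+\mathcal{P}_{a2}}\varphi_a\mathcal{Z}_1(\bar{T})\right],
\end{equation*}
where only the PPP density is rescaled; the Gauss-hypergeometric factor $\mathcal{Z}_1(\bar{T})$ is untouched because the interference integral still runs over $v\in[r,\infty)$ with the same path-loss kernel.

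Substituting this into the previous display and absorbing the $-\pi\varphi_b r^2$ term of $f_R(r)$ into the exponent reproduces exactly \eqref{loss2}. The main point that deserves care, rather than a computational obstacle, is to justify that $\Phi_{a1}$ is an independent thinning of $\Phi_a$ with the stated density: this relies on the modeling assumption that each request independently draws a packet according to the popularity distribution, so that whether a scheduled BS is serving a packet from $\mathcal{C}\setminus\mathcal{M}$ is an independent mark of probability $\mathcal{P}_{a1}/(\mathcal{P}_{a1}+\mathcal{P}_{a2})$ attached to each active BS. Once that independence is granted, Slivnyak's theorem again lets us place the typical cache-enabled user at the origin and the remainder of the argument is identical to the proof of Theorem \ref{theorem1}.
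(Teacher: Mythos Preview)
Your proposal is correct and follows exactly the route the paper intends: the paper's own proof simply states that the derivation can be conducted by referring to the analysis of Theorem~\ref{theorem1} and is omitted, so your line-by-line adaptation replacing $\Phi_a$ with the thinned process $\Phi_{a1}$ of intensity $\tfrac{\mathcal{P}_{a1}}{\mathcal{P}_{a1}+\mathcal{P}_{a2}}\varphi_a$ is precisely what is required. Your additional remark justifying the independent thinning via the independent packet-popularity draws is a welcome clarification that the paper leaves implicit.
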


\begin{proof}
The derivation can be conducted with referring to the analysis of Theorem \ref{theorem1}, and the proof is omitted here.
\end{proof}
\begin{corollary}
When $\sigma^2\rightarrow0$ and $\beta=4$, the PLR  is,
\begin{align}
\!\!\!\mathcal{P}_{l,c}\!=\!\!\frac{1}{\Big[\frac{(1-p_0+p_0p_t)}{{(1-\alpha\delta)}{(1-\delta)^{-1}}}\sqrt{2^{\frac{T}{\tau B}}\!-\!1}~\!\text{arctan}({\sqrt{2^{\frac{T}{\tau B}}\!\!-\!\!1}})\Big]^{\!-\!1}\!\!\!\!+\!1}.
\end{align}
\end{corollary}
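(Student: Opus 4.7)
The plan is to specialize the closed-form expression from Theorem \ref{theorem2} by (i) dropping the noise term and (ii) reducing the Gauss hypergeometric factor $\mathcal{Z}_1(\bar T)$ at $\beta=4$ to an elementary function, and then to recognize the resulting one-dimensional integral as a standard Gaussian moment.

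First, setting $\sigma^2\to 0$ kills the $e^{-r^\beta P^{-1}\bar T \sigma^2}$ factor in (\ref{loss2}), so the transmit power $P$ disappears from the answer and the integrand collapses to a Gaussian $r\exp(-C r^2)$ for a constant $C$ that collects all density-dependent prefactors. At $\beta=4$ the prefactor in $\mathcal{Z}_1$ becomes $\tfrac{2\bar T}{\beta-2}=\bar T$, and the hypergeometric simplifies via the identity ${}_2F_1[1,\tfrac12;\tfrac32;-x]=\arctan(\sqrt{x})/\sqrt{x}$, giving $\mathcal{Z}_1(\bar T)=\sqrt{\bar T}\,\arctan(\sqrt{\bar T})$. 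This is the main analytical step and the only place where a nontrivial special-function identity is used; everything else is algebra.

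Next, I would evaluate $\int_0^\infty r\,e^{-Cr^2}\,dr=\tfrac{1}{2C}$ with
\[
C=\pi\Big[\tfrac{\mathcal{P}_{a1}}{\mathcal{P}_{a1}+\mathcal{P}_{a2}}\varphi_a\sqrt{\bar T}\,\arctan(\sqrt{\bar T})+\varphi_b\Big],
\]
so that $2\pi\varphi_b\cdot\tfrac{1}{2C}=\varphi_b/(C/\pi)$ and
\[
\mathcal{P}_{l,c}=1-\frac{\varphi_b}{\tfrac{\mathcal{P}_{a1}}{\mathcal{P}_{a1}+\mathcal{P}_{a2}}\varphi_a\sqrt{\bar T}\,\arctan(\sqrt{\bar T})+\varphi_b}.
\]
Rewriting $1-\tfrac{1}{1+x}=\tfrac{1}{1+x^{-1}}$ with $x=\tfrac{\varphi_a}{\varphi_b}\cdot\tfrac{\mathcal{P}_{a1}}{\mathcal{P}_{a1}+\mathcal{P}_{a2}}\sqrt{\bar T}\,\arctan(\sqrt{\bar T})$ brings the expression into the target reciprocal form.

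Finally I would substitute the two remaining ratios already established in the paper: $\varphi_a/\varphi_b=1-p_0+p_0p_t$ by the thinning construction of $\Phi_a$, and
\[
\frac{\mathcal{P}_{a1}}{\mathcal{P}_{a1}+\mathcal{P}_{a2}}=\frac{(1-\delta)\mathcal{P}}{(1-\delta)\mathcal{P}+(1-\alpha)\delta\mathcal{P}}=\frac{1-\delta}{1-\alpha\delta},
\]
as derived in the passage preceding (\ref{sinr2}). Substituting and using $\bar T=2^{T/(\tau B)}-1$ gives exactly the stated closed form. The only nonroutine step is recognizing the hypergeometric identity at $\beta=4$; the rest is a short chain of substitutions and an elementary integral.
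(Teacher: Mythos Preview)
Your proposal is correct and follows exactly the natural route the paper intends: specialize Theorem~\ref{theorem2} by setting $\sigma^2\to 0$, use the identity ${}_2F_1[1,\tfrac12;\tfrac32;-\bar T]=\arctan(\sqrt{\bar T})/\sqrt{\bar T}$ at $\beta=4$ to get $\mathcal{Z}_1(\bar T)=\sqrt{\bar T}\,\arctan(\sqrt{\bar T})$, evaluate the Gaussian integral, and substitute $\varphi_a/\varphi_b=1-p_0+p_0p_t$ and $\mathcal{P}_{a1}/(\mathcal{P}_{a1}+\mathcal{P}_{a2})=(1-\delta)/(1-\alpha\delta)$. The paper states the corollary without proof, but this is precisely the same specialization implicitly used for the analogous Corollary~1.
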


Compare (\ref{loss1}) with (\ref{loss2}), it can be observed that the packet loss rate of the cache-enabled user is lower, owing to the interference cancellation with the cached packets. 
\section{Numerical results and discussions}
The simulation results are obtained with Monte Carlo methods in a square area of $10^4\text{m} \times 10^4\text{m}$, where the nodes are scattered based on PPPs with intensity of   $\{\varphi_u,\varphi_b\}=\{\frac{400}{\pi500^2},\frac{4}{\pi500^2}\}~\text{nodes/m}^2$. The transmit power is $P=43$ dBm and 20 MHz bandwidth are shared among different BSs. We set the path-loss $\beta=4$, the packet arrival rate $\lambda=0.025$ packets/second, the fraction of cache-enabled users $\alpha=0.25$ and the slot duration $\tau=0.5$ second. We consider the packet popularity follows Zipf distribution $f_{i}={\frac{1/i^{\gamma}}{\sum_{j=1}^{C}1/j^{\gamma}}}$, where $\gamma=0.8$ and the total number of packets $C=200$. The size of each packet $T=10$ Mbits and the caching ability $M=10$. Consider the network where caching is not available at users as the baseline, i.e., the fraction of cache-enabled users $\alpha=0$.

The probabilities of free-, full- and modest-load BS cells  are illustrated in Fig. \ref{frac}. The analysis and simulation results are consistent well. Compared with the baseline, the probabilities of full-load cells decreases while that of the free- and modest-load cells increases due to the traffic offloading via caching.

In Fig. \ref{lossm}, it can be observed the packet loss rate of cache-enabled users is far lower than that of the cache-untenable users and the baseline, owing to the interference cancellation gain with caching. When $M=5~(15)$,  the average loss rate of the  cache-enabled network decreases by $9.80\%~(15.46\%)$ compared to that of the baseline. Excluding free-load BSs out of the interfering nodes,  our analysis result for the packet loss rate is lower than that of previous studies  regarding all cells are fully loaded. Our work is a supplement to previous studies.
\begin{figure}[t]
\centering
\includegraphics[width=1.95in]{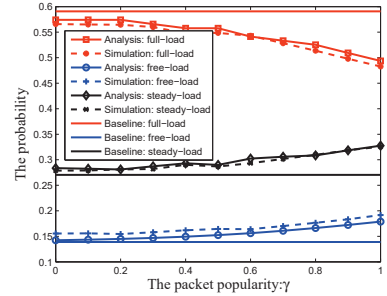}
\caption{The probabilities of BSs in full-, modest- and free-load state for the caching network and the baseline. Note that previous studies consider all cells are active in full-load state, i.e., the probabilities of full-, modest- and free-load BSs are $\{100\%, 0\%, 0\%\}$.}
\label{frac}
\end{figure}
\begin{figure}[t]
\centering
\includegraphics[width=1.95in]{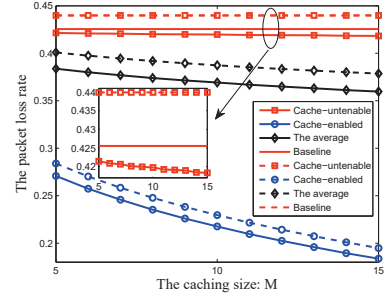}
\caption{The packet loss rate with the change of caching size: solid lines are results of our analysis,  dashed lines are that of previous studies regarding all cells are fully loaded. Note that for previous studies, the packet loss rate of  the baseline overlaps with that of cache-untenable users \cite{SINR, capacity1}.}
\label{lossm}
\end{figure}
\bibliographystyle{IEEEtran}
\bibliography{paperCTY}
\end{document}